\newtheorem{theorem}{Theorem}
\newtheorem{lemma}{Lemma}
\newtheorem{proposition}{Proposition}
\newtheorem{definition}{Definition}
\newtheorem{subdefinition}{Definition}[definition]
\DeclareMathOperator*{\argmax}{arg\,max}
\DeclareMathOperator*{\argmin}{arg\,min}
\newcommand{\norm}[1]{\left\lVert#1\right\rVert}
\newcommand{\onehalf}{\frac{1}{2}}
\newcommand{\sonehalf}{\sfrac{1}{2}\ }
\title{\LARGE \bf
Optimal Myopic Attacks on Nonlinear Estimation
}
\author{R. Spencer Hallyburton, Amir Khazraei, and Miroslav Pajic%
\thanks{This work is sponsored in part by the ONR under agreement N00014-20-1-2745, AFOSR under award number FA9550-19-1-0169, and by the NSF under CNS-1652544 award and the National AI Institute for Edge~Computing Leveraging Next Generation Wireless Networks, Grant CNS-2112562.}
\thanks{R. S. Hallyburton, A. Khazraei, and M. Pajic are with Department of Electrical and Computer Engineering, Duke University, Durham, NC 27708, USA;
 {\tt\small miroslav.pajic@duke.edu}}%
}
\begin{document}

\maketitle
\thispagestyle{plain}
\pagestyle{plain}

\begin{abstract}
Prior works have analyzed the security of estimation and control (E\&C) for linear, time-invariant systems; however, there are few analyses of nonlinear systems despite their broad safety-critical use. We define two attack objectives on nonlinear E\&C and illustrate that realizing the optimal attacks against the widely-adopted extended Kalman filter with industry-standard $\chi^2$ anomaly detection is equivalent to solving convex quadratically-constrained quadratic programs. Although these require access to the true state of the system, we provide practical relaxations on the optimal attacks to allow for execution at runtime given a specified amount of attacker knowledge. We show that the difference between the optimal and relaxed attacks is bounded by the attacker knowledge.
\end{abstract}
\section{Introduction} \label{1-introduction}

Security analysis of estimation and control (E\&C) in cyber-physical systems (CPS) has attracted considerable research interest due to safety-critical CPS applications. Most of the influential work in E\&C CPS security has centered on linear, time-invariant (LTI) systems. For instance,~\cite{Amin2010a, Liu2011, Teixeira2010a, hashemi2018comparison} exploited vulnerabilities of LTI E\&C with information models ranging from full system access to single-sensor level knowledge to demonstrate concerning vulnerabilities in some of the most widely used E\&C algorithms. After the discovery of LTI E\&C vulnerabilities, subsequent works proposed algorithms for detecting attacks and architectures for attack-resilient state estimation~\cite{Bai2017c, Pajic2016b, Pajic2014b, Pasqualetti2013a}. As a response, recent focus has been directed towards undetectable or ``stealthy" attacks on LTI CPS (e.g.,~\cite{khazraei_automatica22}).

However, insights from analysis of LTI CPS lack practical relevance because controlled physical processes of safety-critical importance are often nonlinear. For example, automotive applications with inertial measurement units (IMUs) are nonlinear in control. Add in relative-range sensor or tightly-coupled global positioning systems (GPS), and the problem is also nonlinear in the measurements. Airborne applications such as drones are similarly often highly nonlinear.

A handful of works have attempted to analyze nonlinear, time-invariant control from a security perspective. For example,~\cite{rahman2013false} investigated nonlinear AC control in power grids and designed false-data injection attacks. However, these often consider highly specialized attack goals, e.g., \cite{rahman2013false} derived attacks in closed-form with precise dynamical equations;~\cite{liu2016extended} analyzed the extended Kalman filter (EKF) but considered stochastic attacks rather than an optimal attack. These works, while interesting case studies, provide little in advancing a broad understanding of CPS security.

Thus, there is a gap in existing literature. LTI system analyses leverage the simplicity of the dynamics to derive provably optimal attacks and accurate resilient estimators (e.g.,~\cite{Pajic2016b, Pajic2014b}). Unfortunately, few of these ideals can be transferred to nonlinear systems. The complexity and suboptimality of nonlinear estimators has correspondingly allowed for few established guarantees in nonlinear theory and~applications; hence, recent works mainly focused on the use of deep-learning for effective attack design (yet, without any guarantees) on system with nonlinear dynamics (e.g.,~\cite{khazraei_iccps22}).

Consequently, to address this shortcoming, in this work, we establish optimal and stealthy false-data-injection attacks against the widely-used EKF. We select a permissive information model and describe two myopic (one-step) attack objectives. The first is a \emph{myopic maximum deviation} (MMD) attack that maximally deviates the state estimation error in an attacker-defined subspace of the state space. The second is a \emph{myopic adversarial state approach} (MASA) attack that optimally pushes the victim's state towards an adversarial state in an attacker-defined subspace. We show that the designs of both attacks can be captured as convex optimization problems that are solvable in polynomial~time.

Several of the derived optimal attacks are practically infeasible because they require more knowledge than the attacker may be able to acquire. In such cases, we pursue practical relaxations of the original objective based on an information model and derive guarantees on the boundedness of the sub-optimality for the relaxed case. Finally, we demonstrate the effectiveness of attacks in a case study and find that attacking nonlinear estimation is effective and has robust performance guarantees. With strong guarantees and efficient runtime performance, our proposed attacks establish a new framework for security analysis of nonlinear dynamical~systems.

The paper is organized as follows: Section~\ref{sec:2-system-model} presents the state estimation models of linear and nonlinear systems. Section~\ref{sec:3-estimation-security} introduces the security model including the attacker's knowledge and goals. Section~\ref{sec:4-opt-attacks} then derives the optimal myopic attacks on nonlinear Kalman filtering and provides guarantees on practical relaxations. Finally, Section~\ref{sec:5-experiments} covers case studies and Monte Carlo simulations to evaluate the optimal attacks and derived bounds.

\subsubsection*{Notation} \label{sec:2-system-model-notation}

$\mathbb{N}$ and $\mathbb{R}$ denote the sets of natural and real numbers, respectively. $\mathbb{R}^{n}_{+}$ is the non-negative subspace of $\mathbb{R}^n$. $\text{Pr}$ denotes the probability for a random variable. $\mathcal{N}(\mu, \Sigma)$ denotes a Gaussian distribution with mean vector $\mu$ and covariance matrix $\Sigma$. We represent positive-(semi)-definiteness of a matrix $M$, as $M\succ(\succeq)0$.
\section{System Model and Preliminaries} \label{sec:2-system-model}

In this section, we formally introduce the model of nonlinear estimation in CPS.

\subsection{State Estimation} \label{sec:2-system-model-estimation}

We consider a discrete-time nonlinear time-invariant physical process modeled in the standard state-space form~as
\begin{align}
    \begin{split}
    x_{k} &= f(x_{k-1}, u_{k}) + w_k,\\
    z_{k} &= h(x_{k}) + v_k;
    \end{split}
\end{align}
here, $x_k \in {\mathbb{R}^n}$, $u_k \in {\mathbb{R}^m}$, $z_k \in {\mathbb{R}^p}$ are the state, input and output vectors of the plant at time $k\in\mathbb{N}$; $f$ and $h$ are nonlinear functions capturing state transition and measurement models, respectively. Finally, $w_k \in {\mathbb{R}^{n}}$ and $v_k \in {\mathbb{R}^p}$ are the process and measurement noises that are assumed to be Gaussian with zero mean and $Q$ and $R$ covariance matrices, respectively.

\subsubsection{Extended Kalman Filter (EKF)}
If $f$ and $h$ are nonlinear and at least differentiable to first-order, the EKF is a practical way to estimate states. The EKF uses a propagation step to mix control signal and dynamical equations and an update step to fuse measurements.

\vspace{2pt}
\noindent\emph{Propagation:} Linearizing $f$ as $F_k\coloneqq\frac{\partial f}{\partial x}\Bigr|_{x=\hat{x}_{k-1|k-1}}$, the state is propagated using the control signal,
\begin{align} \label{eq:ekf-propagation}
    \begin{split}
        \hat{x}_{k|k-1} &= f(\hat{x}_{k-1|k-1}, u_k)\\
        P_{k|k-1} &= F_k P_{k-1|k-1} F_k^T + Q_k,
    \end{split}
\end{align}
where $P\succ 0$ is the state covariance matrix.

\vspace{2pt}
\noindent\emph{Update:} Linearizing $h$ as  $H_k\coloneqq\frac{\partial h}{\partial x}\Bigr|_{\hat{x}_{k|k-1}}$, the state is updated with the innovation, $\Tilde{y}_k$ (i.e., the residual),
\begin{align} \label{eq:ekf-update}
    \begin{split}
        \Tilde{y}_k &= z_k - h(\hat{x}_{k|k-1});\quad S_k = H_k P_{k|k-1} H_k^T + R_k,\\
        \hat{x}_{k|k} &= \hat{x}_{k|k-1} + K_k \Tilde{y}_k;\quad K_k = P_{k|k-1} H_k^T S_k^{-1},
    \end{split}
\end{align}
with $S_k$ the innovation covariance and $K_k$ the Kalman gain.

\subsubsection {Anomaly Detection}
If the system is truly \emph{linear} with Gaussian noise, the innovations are white (i.e.,~$\Tilde{y}_k\sim\mathcal{N}\left(0, S_k\right)$) and the scalar $g_k^{\chi^2}\coloneqq\Tilde{y}_k^T S_k^{-1} \Tilde{y}_k$ follows a $\chi^2$ distribution with $p$ degrees of freedom. This leads to a statistical anomaly detection function for incoming measurements:
\begin{align} \label{eq:chi2-inst}
    \begin{split}
        \text{reject measurement if: } g_k^{\chi^2} &> \tau\\
        \text{where } \beta = \text{Pr}(V\leq\tau), \quad g_k^{\chi^2}&\coloneqq\Tilde{y}_k S_k^{-1} \Tilde{y}_k;
    \end{split}
\end{align}
i.e., a measurement is rejected if $g_k^{\chi^2}$ exceeds a threshold $\tau$. That threshold is set such that, for a perfect $\chi^2$ random variable $V$, the smallest $\beta$ (e.g.,~$\beta$=~99\%~\cite{Jovanov2019b}) are accepted.

The $\chi^2$ anomaly detector is still used in non-linear systems in practice using the linearizations and assuming the dynamical models capture the behavior of the plant.
\section{State Estimation Security Model} \label{sec:3-estimation-security}

We make two assumptions on the attacker. First, the attacker has access to a ``full-reactive" suite of knowledge, defined in Section~\ref{sec:3-threat-model}. Second, the attack goal is \emph{myopic}. A fully general attack could trade short-term loss for long-term gains. However, as described in Section~\ref{sec:3-attack-goal}, this can be challenging to formalize and compute in real-time.

\subsection{Threat Model} \label{sec:3-threat-model}
\subsubsection{Knowledge} We consider four elements of knowledge important for CPS controllers. Namely, these are:
\begin{itemize}
    \item \textbf{System Goal State} -- Knowledge of the intended future state of the system;
    \item \textbf{Control Signals} --  Access to the control signals, $u_k$, and the state propagation in~\eqref{eq:ekf-propagation};
    \item \textbf{Measurement Models} -- Access to the state update of~\eqref{eq:ekf-update} including the measurement model and measurement noise;
    \item \textbf{Sensor Data} -- Access to sensor data from one or more sensors in real-time. 
\end{itemize}

In this work, we analyze cases where the attacker has near-complete knowledge. Specifically, we consider a ``full reactive" set of knowledge where the attacker has all knowledge except the system's goal state. 

\subsubsection{Capability}
We assume the attacker can only modify existing sensor data and cannot send additional sensor data nor modify the measurement timestamp, consistent with e.g.,~\cite{Ding2018a, Pajic2016b, Teixeira2010a}. We also assume the attacker cannot reliably compromise control signals. In general, such attacks can be modeled as an adversarial bias -- i.e., $z_{k}^a \coloneqq z_k + a_k$.

Thus, the state update of~\eqref{eq:ekf-update} with anomaly detection of~\eqref{eq:chi2-inst} under such an attack can be captured as
\begin{align} \label{eq:comp-ekf-update}
    \begin{split}
        \hspace{-4pt}
        \hat{x}_{k|k}(a_k)\,&=\,\begin{cases}
            \hat{x}_{k|k-1} & \text{if} \ \Tilde{y}_k^T(a_k) S_k^{-1} \Tilde{y}_k(a_k) > \tau\\
            \hat{x}_{k|k-1} + K_k \Tilde{y}_k(a_k) & \text{otherwise}\\
        \end{cases}
    \end{split}
\end{align}
with $\Tilde{y}_k(a_k) \coloneqq z_k + a_k - h(\hat{x}_{k|k-1})$;
i.e., \emph{any measurement~triggering the detector 
is not included in the estimation~update}.

\subsection{Attack Goal} \label{sec:3-attack-goal}

A fully general attack could trade short-term loss for long-term gain. However, it is challenging to formalize an attacker planning for short and long term horizons when the attack goal may be unbounded in state space (e.g.,~maximum deviation). Thus, we formalize attacks as myopic (one-step) optimization problems. We define two classes of attacker goal for E\&C: the myopic maximum deviation (MMD) and the myopic adversarial state approach (MASA) attacks.

The MMD attack maximizes the error between the victim's state and the true state of the system.
\begin{definition}[Myopic Maximum Deviation (MMD)]
\label{def:myopic-max-dev}
An attack $a_k^*\in\mathbb{R}^p$ is a myopic maximum deviation attack if
\begin{align} \label{eq:mmd-attack}
    a_k^* = \argmax_a \ \onehalf \norm{C(x_{k} - \hat{x}_{k|k}(a_k))}^2,
\end{align}
where $C\in\mathbb{R}^{w\times n}_{+}$, $w\leq n$, is an attacker-specified projection (e.g.,~weight) matrix.
\end{definition}

The MASA attack optimally moves towards an adversary-defined state at each step, 
implemented with two subvariants.

\begin{definition}[Myopic Adversarial-State Approach (MASA)] \label{def:masa-attack}
Let placeholder $\nu_k \in \mathbb{R}^n$ be a function of attack $a_k\in\mathbb{R}^p$. Let $C\in\mathbb{R}^{w\times n}_{+}$ be an attacker-specified projection matrix, $w\leq n$. Let $\mathcal{X}_k^a\in\mathbb{R}^w$ be an attacker-specified state. Then, $\nu_k$ approaches $\mathcal{X}_k^a$ under the attack if
\begin{align*}
    \norm{C \nu_k(a_k) - \mathcal{X}_k^a} < \norm{C \nu_{k}(0) - \mathcal{X}_k^a}.
\end{align*}
Such an attack is myopic optimal if
\vspace{-6pt}
\begin{align}
\label{eq:masa-attack}
    a_k = a_k^* = \argmin_a \onehalf \ \norm{C \nu_k(a) - \mathcal{X}_k^a}^2.
\end{align}
\end{definition}

\begin{subdefinition}
\label{def:es-masa-attack}
An \emph{estimated-state MASA attack} is a MASA attack with $\nu_k \coloneqq \hat{x}_{k|k}$. In addition, a \emph{true-state MASA attack} is a MASA attack with $\nu_k \coloneqq x_k$.
\end{subdefinition}

In the remainder of this work, we derive optimal, polynomial-time realizations of MMD, estimated-state MASA, and true-state MASA attacks. We also provide practical relaxations to for a ``full reactive" knowledge model.
\section{Optimal Attacks} \label{sec:4-opt-attacks}

We derive polynomial-time optimal attacks for MMD and MASA objectives. Under the full-reactive knowledge, the MMD optimization is infeasible due to the required knowledge of the true state. Thus, we propose a feasible plant-state relaxation to the MMD attack. We find the estimated-state MASA is feasible while the true-state MASA is infeasible and requires relaxation. However, we do not show guarantees on the relaxed true-state MASA attack.

\paragraph{Additional notation} To simplify our notation, we use $\hat{x}\coloneqq\hat{x}_{k|k}$, $\hat{x}^-\coloneqq\hat{x}_{k|k-1}$, and $\hat{h}^-\coloneqq h(\hat{x}_{k|k-1})$. Since the attacks are myopic, we safely drop time ($k$) subscripts for any E\&C element. Below, we define the substitutions used to transform nonlinear attack objectives into quadratically-constrained quadratic program (QCQPs), as in Propositions~\ref{prop:mmdopt},~\ref{prop:es-masa}, and~\ref{prop:ts-masa}, and introduce subscripts only to differentiate between the objective ($A_0$, $b_0$) and constraints ($A_1$, $b_1$, $d_1$). We also define the following terms (the ``Substitutions"):
\begin{align*}
    A_0 &\coloneqq (CK_k)^T CK_k \geq 0 \\
    b_0 &\coloneqq -(CK_k)^TC \left(x_k - (\hat{x}_{k|k-1}+K_k(z_k-h(\hat{x}_{k|k-1}))\right)\\
    \breve{b}^0 &\coloneqq -(CK_k)^TC \left(\breve{x}_k - (\hat{x}_{k|k-1}+K_k(z_k-h(\hat{x}_{k|k-1}))\right)\\
    \Bar{b}^0 &\coloneqq (CK_k)^T C (\hat{x}_{k|k-1} + K_k(z_k - h(\hat{x}_{k|k-1}))) - (CK_k)^T \mathcal{X}^a\\
    \breve{\Bar{b}}^0 &\coloneqq (CK_k)^T C (\hat{x}_{k|k-1} + K_k(z_k - h(\hat{x}_{k|k-1}))) - (CK_k)^T \mathcal{X}^b\\
    \mathcal{X}^b &\coloneqq C\hat{x}_{k|k-1} + C\breve{x}_{k|k} - \mathcal{X}^a\\
    A_1 &\coloneqq 2 S_k^{-1} > 0\\
    b_1 &\coloneqq 2 S_k^{-1} (z_k - h(\hat{x}_{k|k-1})) \\
    d_1 &\coloneqq (z_k - h(\hat{x}_{k|k-1}))^T S_k^{-1} (z_k - h(\hat{x}_{k|k-1})) - \tau.
\end{align*}
\begin{align*}
    &\text{Objectives:}\ J(a) \coloneqq \sonehalf\, a^T A_0 a + b_0^T a\\
    &\quad\quad\quad\quad\ \ \breve{J}(a) \coloneqq \sonehalf \, a^T A_0 a + \breve{b}_0^T a\\
    &\quad\quad\quad\quad\ \ \Bar{J}(a) \coloneqq \sonehalf \, a^T A_0 a + \Bar{b}_0^T a\\
    &\quad\quad\quad\quad\ \ \breve{\Bar{J}}(a) \coloneqq \sonehalf \, a^T A_0 a + \breve{\Bar{b}}_0^T a\\
    &\text{Constraint:}\ G(a) \coloneqq \sonehalf\, a^T A_1 a + b_1^T a + d_1 \leq 0.
\end{align*}

\subsection{Design of MMD Attacks}
We now consider how to implement optimal and practical MMD attacks introduced in Definition~\ref{def:myopic-max-dev}.

\begin{proposition} \label{prop:mmdopt}
The MMD attack (from Definition~\ref{def:myopic-max-dev}) can be obtained as the solution of the optimization problem
\begin{align} \label{eq:mmd-qcqp}
    \begin{split}
        a^*_{\text{mmd}} &= \argmax_a J(a) = \argmax_a \ \onehalf a^T A_0 a + b_0^T a, \\
        & \text{subject to} \ G(a) = \onehalf a^T A_1 a + b_1^T a + d_1 \leq 0. \\
    \end{split}
\end{align}
\end{proposition}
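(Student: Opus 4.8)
The plan is to substitute the compromised update \eqref{eq:comp-ekf-update} into the MMD objective \eqref{eq:mmd-attack} and show that, after discarding an additive constant, the resulting program is exactly \eqref{eq:mmd-qcqp}. The only complication is the piecewise definition of $\hat{x}_{k|k}(a)$ in \eqref{eq:comp-ekf-update}: for any $a$ that trips the detector the estimate collapses to the prediction $\hat{x}^-$ and the deviation becomes a fixed value independent of $a$, whereas for any $a$ that passes the detector the estimate is the linear update $\hat{x}^- + K_k(z_k + a - \hat{h}^-)$. First I would therefore partition $\mathbb{R}^p$ into the detector-triggering region and its complement $\{a : G(a)\leq 0\}$, and argue that the maximizer may be sought entirely within the detector-passing region.

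The crux is to show that restricting to the passing region loses nothing, since without this the reduction to a single QCQP is only heuristic; this is the step I expect to be the main obstacle. To settle it I would exhibit the feasible point $a = -(z_k - \hat{h}^-)$, which zeros the innovation and hence satisfies $\Tilde{y}_k^T(a) S_k^{-1}\Tilde{y}_k(a) = 0 \leq \tau$. Under this attack the linear update reproduces the prediction $\hat{x}^-$ exactly, so it attains precisely the fixed deviation $\onehalf\norm{C(x_k - \hat{x}^-)}^2$ achieved on the entire triggering region. Consequently the supremum over the triggering region is matched by a feasible point of the passing region, and the global maximum of \eqref{eq:mmd-attack} equals the maximum of the deviation over $\{a : G(a)\leq 0\}$.

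It then remains to carry out routine algebra on the passing region. Writing $x_k - \hat{x}_{k|k}(a) = \big(x_k - (\hat{x}^- + K_k(z_k-\hat{h}^-))\big) - K_k a$, so that $C(x_k - \hat{x}_{k|k}(a)) = C\big(x_k - (\hat{x}^- + K_k(z_k-\hat{h}^-))\big) - CK_k a$, I would expand $\onehalf\norm{C(x_k - \hat{x}_{k|k}(a))}^2$ into a quadratic in $a$. The quadratic coefficient is $(CK_k)^T CK_k = A_0$, the linear coefficient is $b_0$ as given in the Substitutions, and the remaining term is independent of $a$ and hence immaterial to the $\argmax$; this reproduces $J(a)$. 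Finally, expanding the detector inequality $\Tilde{y}_k^T(a) S_k^{-1}\Tilde{y}_k(a)\leq\tau$ with $\Tilde{y}_k(a) = (z_k - \hat{h}^-) + a$ yields $\onehalf a^T(2S_k^{-1})a + (2S_k^{-1}(z_k-\hat{h}^-))^T a + \big((z_k-\hat{h}^-)^T S_k^{-1}(z_k-\hat{h}^-) - \tau\big)\leq 0$, which is precisely $G(a)\leq 0$ under the definitions of $A_1$, $b_1$, $d_1$. Matching these expressions to \eqref{eq:mmd-qcqp} completes the argument.
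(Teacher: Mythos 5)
Your proposal is correct and follows essentially the same route as the paper's proof: a case split on the piecewise update \eqref{eq:comp-ekf-update}, the observation that the zero-innovation attack $a = \hat{h}^- - z_k$ is a stealthy point matching the fixed deviation achieved anywhere in the detector-triggering region (so restricting to $G(a)\leq 0$ loses nothing), and then the quadratic expansion yielding $A_0$, $b_0$, $A_1$, $b_1$, $d_1$. If anything, your handling of the triggering region is stated slightly more carefully than the paper's, which asserts the equivalence without spelling out that the supremum there is matched by a feasible point.
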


Intuitively, Proposition~\ref{prop:mmdopt} states that the most effective attack is \emph{stealthy} for the employed attack detector (i.e., does not trigger the anomaly detector~\eqref{eq:chi2-inst}) because, due to~\eqref{eq:comp-ekf-update}, sensor measurements that trigger the detector are rejected.

\begin{proof}
We begin with~\eqref{eq:mmd-attack} and perform transformations that do not change the optimization. We consider $\hat{x}(a)$ according to~\eqref{eq:comp-ekf-update} which is piecewise with cases as follows. 

\vspace{4pt}
\noindent 
Case (1): when $\Tilde{y}^T(a) S^{-1} \Tilde{y}(a) \leq \tau$. 
Then, from~\eqref{eq:mmd-attack}, using $l\coloneqq x - \hat{x}^- - Kz + K\hat{h}^-$, it holds that
\begin{align*}
        a^* &= \argmax \norm{C(x - (\hat{x}^- + K \Tilde{y}(a)))}^2\\
        &= \argmax ~(l - Ka)^T C^T C (l - Ka)\\
        &= \argmax ~a^T K^T C^T C K a - 2 l C^T C K a + l^T C^T C l\\
        &= \argmax ~\onehalf a^T K^T C^T C K a - l C^T C K a\\
        &= \argmax ~J(a),
\end{align*}
\vspace{4pt}
\noindent
Case (2): when $\Tilde{y}^T(a) S^{-1} \Tilde{y}(a) > \tau > 0$. 
Then, from~\eqref{eq:mmd-attack},
\begin{align*}
    a^* &= \argmax \ \norm{C(x - \hat{x}^-)}^2 \\
    &= \argmax \ \norm{C(x - \hat{x}(a-\hat{h}^- - z))}^2.
\end{align*}
Thus, any attack causing the $\chi^2$-detector to exceed the threshold $\tau$ has the same effect on $\hat{x}$ as the stealthy attack $a^0=\hat{h}^- - z$. Therefore, it is sufficient to consider only stealthy attacks $\{a\ |\ g_k^{\chi^2}(a) \leq \tau\}$, which is equivalent to imposing the constraint $G(a)\leq0$.
\end{proof}

The MMD attack is thus a QCQP with a single constraint, which is solvable in polynomial time regardless of the convexity of the objective and constraint functions~\cite{boyd2004convex}. Nevertheless, the MMD QCQP is convex ($A_0\succeq 0$, $A_1\succ 0$).

\begin{proposition}\label{prop:chi2-equality}
$G(a^*)=0$ for $a^*$ the optimal MMD attack. Equivalently, $y(a^*)^T S^{-1} y(a^*) = \tau.$
\end{proposition}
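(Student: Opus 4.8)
The plan is to show that the detector constraint is \emph{active} at the MMD optimum, i.e.\ that the optimal attack saturates the $\chi^2$ budget, by a short contradiction argument built on two structural facts already in hand. The objective $J$ is convex because $A_0 = (CK)^T CK \succeq 0$, and the feasible region $\{a : G(a)\le 0\}$ is a solid ellipsoid because $A_1 = 2S^{-1}\succ 0$. Maximizing a convex function over a compact convex body forces the maximizer onto the boundary, and for an ellipsoid the boundary is exactly $\{a : G(a)=0\}$; I would make this precise rather than invoke the general principle.

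First I would confirm that the feasible set is a full-dimensional compact ellipsoid, so that the maximum exists and the interior is nonempty. Completing the square gives $G(a) = (a+r)^T S^{-1}(a+r) - \tau$ with $r \coloneqq z - \hat{h}^-$, so $G$ is minimized at $a_c = -r$ with value $-\tau < 0$; since $\tau>0$, the interior is nonempty. Next I would assume for contradiction that $G(a^*)<0$, so that $a^*$ is interior. First-order optimality at an interior extremum of the differentiable objective forces $\nabla J(a^*) = A_0 a^* + b_0 = 0$, so $a^*$ is a stationary point of the convex function $J$ and hence a global \emph{minimizer} of $J$ — not a maximizer. I would close the contradiction by exhibiting a feasible direction $v$ along which $J$ strictly increases: at a stationary point $J(a^*+tv)=J(a^*)+\tfrac{t^2}{2}\,v^T A_0 v$, so any $v$ with $v^T A_0 v>0$ suffices, and such $v$ remains feasible for small $t$ because $a^*$ is interior. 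This contradicts the optimality of $a^*$, forcing $G(a^*)=0$.

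The equivalent form is then immediate from the same completed square. Since $\Tilde{y}(a)=z+a-\hat{h}^- = a+r$, we have $G(a)=\Tilde{y}(a)^T S^{-1}\Tilde{y}(a)-\tau$, so $G(a^*)=0$ reads precisely $\Tilde{y}(a^*)^T S^{-1}\Tilde{y}(a^*)=\tau$, matching the statement.

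The main obstacle is the degenerate case in which $J$ is flat: if $CK=0$ then $A_0=0$ and, from the definition of $b_0=-(CK)^T C(\,\cdot\,)$, also $b_0=0$, so $J\equiv 0$, every feasible attack is optimal, and interior optima certainly exist. The proof therefore needs the mild non-degeneracy assumption $CK\ne 0$, which guarantees $A_0\ne 0$ and hence the existence of the increasing direction $v$ used above; intuitively, the projected attack channel $CK$ must actually influence the estimate. I would state this assumption explicitly (it is implicit in referring to $a^*$ as ``the'' optimal attack), after which the remaining steps are routine.
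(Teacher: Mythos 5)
Your proof is correct, but it takes a genuinely different route from the paper's. The paper disposes of this in three sentences by citing a classical result of global optimization (Horst--Tuy, Theorem I.1): the maximum of a convex function over a compact convex feasible set is attained at an extreme point, and since the feasible set here is an ellipsoid (strictly convex body), its extreme points are exactly the boundary $\{a : G(a)=0\}$. You instead give a self-contained perturbation argument: complete the square to verify the feasible set is a compact ellipsoid with nonempty interior, then rule out an interior maximizer $a^*$ by noting $\nabla J(a^*)=A_0a^*+b_0=0$ would force $J(a^*+tv)=J(a^*)+\tfrac{t^2}{2}v^TA_0v>J(a^*)$ for any $v$ with $v^TA_0v>0$ and small $t$, contradicting optimality. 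The trade-off: the paper's citation is shorter and applies to any convex objective, while your argument needs no external theorem, and it proves the stronger claim that \emph{every} maximizer lies on the boundary (the cited theorem only guarantees that \emph{some} maximizer does, which is what the proposition's phrasing ``$G(a^*)=0$ for $a^*$ the optimal attack'' actually requires if the argmax is not a singleton). Your identification of the degenerate case $CK=0$ --- where $A_0=0$, $b_0=0$, $J\equiv 0$, and interior optima exist, so the proposition as stated fails --- is a genuine caveat that the paper's proof silently skips over; making the non-degeneracy assumption $CK\neq 0$ explicit is an improvement, not a detour.
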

\begin{proof}
    The MMD objective is a convex maximization problem. The global maximum of a convex function $f:\mathbb{R}^{n} \to \mathbb{R}$ is attained at an extreme feasible point over the domain of $f$. With continuous convex constraints, this point satisfies the constraint with equality (see e.g.,~\cite{horst2013global} Theorem I.1).
\end{proof}

\subsubsection{Practical Relaxations}
It is not possible to know $b_0$ due to the dependence on the true plant state $x$. We therefore propose a plant-state relaxation of the MMD attack using an attacker's uncompromised estimate of the plant, $\breve{x}\coloneqq\breve{x}_{k|k}$.

\begin{definition}[Plant-State MMD Attack]
An attack $a^\dagger\in\mathbb{R}^p$ is a plant-state MMD attack if
\begin{align}
    a^\dagger = \argmax_a \ \onehalf \norm{C(\breve{x} - \hat{x}(a))}^2,
\end{align}
where $\breve{x}$ is the attacker's \emph{uncompromised} estimate of $x$.
\end{definition}

\begin{proposition}\label{prop:plant-state-mmd-qcqp}
The plant-state MMD attack is the solution
\begin{align} \label{eq:plant-state-mmd-qcqp}
    \begin{split}
        a^\dagger &= \argmax_a \breve{f}(a) = \argmax_a \ \onehalf a^T A_0 a + \breve{b}_0^T a, \\
        & \text{subject to} \ G(a) = \onehalf a^T A_1 a + b_1^T a + d_1 \leq 0.
    \end{split}
\end{align}
\end{proposition}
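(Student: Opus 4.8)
The plan is to follow the proof of Proposition~\ref{prop:mmdopt} line by line, substituting the attacker's uncompromised estimate $\breve{x}$ for the true state $x$ everywhere it enters the objective. The crucial observation is that the piecewise structure of $\hat{x}(a)$ in~\eqref{eq:comp-ekf-update} depends only on the sensor data, the measurement model, and the $\chi^2$ detector---none of which involve the quantity being maximized---so the same two-case decomposition applies verbatim.

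First I would invoke~\eqref{eq:comp-ekf-update} and split on whether the detector fires. In Case~(1), when $\Tilde{y}^T(a) S^{-1} \Tilde{y}(a) \leq \tau$, I would expand $\onehalf\norm{C(\breve{x} - \hat{x}(a))}^2$ using the shorthand $\breve{l} \coloneqq \breve{x} - \hat{x}^- - Kz + K\hat{h}^-$, so that $\breve{x} - \hat{x}(a) = \breve{l} - Ka$. The identical algebraic manipulation from Proposition~\ref{prop:mmdopt} then yields $\argmax \breve{J}(a)$, where the linear coefficient is exactly $\breve{b}_0 = -(CK)^T C \breve{l}$, matching the Substitutions (the sole change from $b_0$ being $x \mapsto \breve{x}$). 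In Case~(2), when the detector fires, the compromised estimate collapses to $\hat{x}^-$, which is independent of the objective's state vector; hence the argument that any detector-triggering attack induces the same $\hat{x}$ as the stealthy attack $a^0 = \hat{h}^- - z$ carries over unchanged. This shows it suffices to restrict attention to stealthy attacks $\{a : g_k^{\chi^2}(a) \leq \tau\}$, which is precisely the constraint $G(a) \leq 0$.

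The step requiring care---rather than a genuine obstacle---is confirming that replacing $x$ by $\breve{x}$ perturbs \emph{only} the linear term of the objective, leaving the quadratic term $A_0 = (CK)^T CK$ and the entire constraint $G$ (built from $A_1 = 2S^{-1}$, $b_1$, and $d_1$) untouched. This is immediate from the Substitutions, since each of these quantities is a function solely of sensor data and EKF matrices, independent of whether the true or estimated plant state appears in the deviation being maximized. With the objective reduced to $\breve{J}$ and the feasible set to $\{a : G(a) \leq 0\}$, the claimed QCQP~\eqref{eq:plant-state-mmd-qcqp} follows directly.
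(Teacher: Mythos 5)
Your proof is correct and is exactly the paper's argument: the paper's own proof is the one-line remark ``Follows Proposition~\ref{prop:mmdopt}, replacing $x$ with $\breve{x}$,'' and your write-up simply makes explicit why that substitution only changes the linear coefficient $b_0 \mapsto \breve{b}_0$ while leaving $A_0$ and the constraint $G$ intact. No gaps; you have just filled in the details the paper leaves implicit.
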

\begin{proof}
Follows Proposition~\ref{prop:mmdopt}, replacing $x$ with $\breve{x}$.
\end{proof}

The plant-state MMD attack is \emph{feasible} at runtime under full-reactive knowledge with the attacker's estimate of the true state. However, the optimal plant-state MMD attack from~\eqref{eq:plant-state-mmd-qcqp} will be suboptimal on the MMD objective from~\eqref{eq:mmd-attack} compared to the optimal MMD attack from~\eqref{eq:mmd-qcqp}. We therefore seek to bound the performance loss in the following result.

\begin{theorem}[Optimal Attack Error Absolutely Bounded] \label{theorem:att-err-bound}
The error between the plant-state and true-state MMD attacks is bounded by
\begin{align}
    \norm{a^* - a^\dagger} \leq 2 \sqrt{\tau \lambda_{\text{max}}(S)},
\end{align}
where $\lambda_{\text{max}}(S)$ is the largest eigenvalue of the innovation covariance matrix $S$ and $\tau$ is the $\chi^2$ threshold.
\end{theorem}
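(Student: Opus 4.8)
The plan is to exploit the fact that both the true-state optimum $a^*$ and the plant-state optimum $a^\dagger$ solve convex maximizations over the \emph{same} feasible set, so both are forced onto a common boundary ellipsoid; the claimed bound is then nothing more than the diameter of that ellipsoid. Concretely, Proposition~\ref{prop:mmdopt} and Proposition~\ref{prop:plant-state-mmd-qcqp} give objectives $J$ and $\breve{J}$ that differ only in their linear term ($b_0$ versus $\breve{b}_0$) but share the quadratic part $A_0\succeq 0$ and the identical constraint $G(a)\leq 0$. Since $A_0\succeq 0$, each objective is convex, and Proposition~\ref{prop:chi2-equality} (whose argument applies verbatim to $\breve{J}$) yields $G(a^*)=G(a^\dagger)=0$. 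Thus both attacks lie on the boundary surface $\{a : G(a)=0\}$, and I will bound $\norm{a^*-a^\dagger}$ by bounding the distance of each point from the ellipsoid's center and invoking the triangle inequality.

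First I would rewrite the constraint in a centered form. Writing $\Tilde{y}(a)=z+a-\hat{h}^-=a+r$ with $r\coloneqq z-\hat{h}^-$, and completing the square in $G$ using $A_1=2S^{-1}$, $b_1=2S^{-1}r$, $d_1=r^TS^{-1}r-\tau$, the boundary condition $G(a)=0$ becomes the ellipsoid
\begin{equation*}
    (a+r)^T S^{-1} (a+r) = \tau,
\end{equation*}
which is exactly the detector-equality statement $\Tilde{y}(a)^T S^{-1}\Tilde{y}(a)=\tau$ of Proposition~\ref{prop:chi2-equality}, now read as a geometric locus in $a$ centered at $-r$. Next I would bound the norm of an arbitrary boundary point. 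Setting $u=a+r$ and using the spectral decomposition of $S\succ 0$, the inequality $u^T S^{-1} u \geq \lambda_{\text{max}}(S)^{-1}\norm{u}^2$ together with $u^T S^{-1} u=\tau$ gives $\norm{a+r}=\norm{u}\leq\sqrt{\tau\,\lambda_{\text{max}}(S)}$. Applying this to both $a^*$ and $a^\dagger$ and using the triangle inequality,
\begin{equation*}
    \norm{a^*-a^\dagger} \leq \norm{a^*+r} + \norm{a^\dagger+r} \leq 2\sqrt{\tau\,\lambda_{\text{max}}(S)},
\end{equation*}
which is the claimed bound.

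The main obstacle is not computational but a matter of justification: Proposition~\ref{prop:chi2-equality} is stated only for the true-state optimum $a^*$, so I must argue that the same constraint-activation holds for the relaxed optimum $a^\dagger$. This is immediate once I note that the plant-state problem of~\eqref{eq:plant-state-mmd-qcqp} is again a convex maximization of a (non-constant) convex function over the compact convex ellipsoid, whose extreme points are precisely its boundary, so the maximizer satisfies $G(a^\dagger)=0$ by the identical extreme-point argument (e.g.,~\cite{horst2013global}). The only other point requiring care is the direction of the eigenvalue inequality: the bound on $\norm{u}$ follows from $\lambda_{\text{min}}(S^{-1})=\lambda_{\text{max}}(S)^{-1}$, and the resulting estimate is in fact tight, equaling the major-axis diameter of the ellipsoid, so no sharper absolute bound is available from the boundary structure alone.
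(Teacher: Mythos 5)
Your proposal is correct and is essentially the paper's own argument in geometric dress: both proofs reduce to the observation that $\Tilde{y}(a^*)-\Tilde{y}(a^\dagger)=a^*-a^\dagger$ (your $u=a+r$ is exactly the innovation), bound $\norm{\Tilde{y}(a)}\leq\sqrt{\tau\,\lambda_{\text{max}}(S)}$ via $\lambda_{\text{min}}(S^{-1})=\lambda_{\text{max}}(S)^{-1}$, and finish with the triangle inequality. The one difference is that you route through constraint activation ($G(a^*)=G(a^\dagger)=0$, requiring an extension of Proposition~\ref{prop:chi2-equality} to the plant-state problem), whereas the paper needs only feasibility $G(a)\leq 0$ to get the same norm bound, so that extra step --- while valid --- is unnecessary.
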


\begin{proof}
Let $a^*$, $a^\dagger$ be solutions to the true-state and plant-state MMD problems (i.e., \eqref{eq:mmd-qcqp}, \eqref{eq:plant-state-mmd-qcqp}). From Prop.~\ref{prop:mmdopt},~\ref{prop:plant-state-mmd-qcqp}, all choices $a$ satisfies $y(a)^TS^{-1}y(a) \leq \tau$. Since $\norm{y(a)}^2 \lambda_{\text{min}}(S^{-1}) \leq y(a)^TS^{-1}y(a)$, $\norm{y(a)}^2 \leq \frac{\tau}{\lambda_{\text{min}}(S^{-1})} = \tau\lambda_{\text{max}}(S)$. Finally, $\norm{y(a^*) - y(a^\dagger)} = \norm{a^* - a^\dagger} \leq \norm{y(a^*)} + \norm{y(a^\dagger)} \leq 2\sqrt{\tau\lambda_{\text{max}(S)}}$, completing the proof.
\end{proof}

With a full-reactive knowledge model, this result provides a bound on the error between the optimal attack of the feasible plant-state MMD problem compared to the optimal attack of the infeasible MMD problem.

Estimators tend not to be provably optimal for non-linear systems except in special cases. However, methods such as the EKF have shown consistent performance in practice. Often, with Monte Carlo simulation or trials on real data, a bound on the estimation error can be experimentally determined. We use the idea that the estimation error may be unknown but bounded to pursue guarantees on the attack performance in terms of the objective function, $J$.

\begin{definition}[Subspace-Bounded]
An estimate of some state $w_{\text{est}}$ is subspace bounded from the true value $w_{\text{true}}$ by $\delta$ if $\norm{C (w_{\text{est}} - w_{\text{true}})} \leq \delta$, for a predefined projecting (e.g., weight) matrix C.
\end{definition}

Specifically, we continue with the idea that the  error of the  attacker's estimate of the plant state is unknown but subspace bounded by satisfying $\norm{C(x - \breve{x})} \leq \delta$ at each timestep.

\begin{lemma} \label{lemma:q0-bound}
If the error of the attacker's estimate of the plant state is subspace bounded by $\delta$, then the error between the true and observable QCQP linear coefficients, $b_0$ and $\breve{b}_0$, in the objective function at each timestep is bounded by 
\begin{align}
    \norm{b_0 - \breve{b}_0} \leq \delta \sigma_{\text{max}}(CK),
\end{align}
where $\sigma_{\text{max}}(CK)$ is the largest singular value of $CK$, $C$ is an attacker-defined weight matrix, and $K$ the Kalman gain.
\end{lemma}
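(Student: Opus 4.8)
The plan is to exploit the fact that $b_0$ and $\breve{b}_0$ differ only in which state vector they reference ($x$ versus $\breve{x}$), so their difference telescopes to a single clean expression. First I would write both linear coefficients from the Substitutions and subtract. Since the term $\hat{x}^- + K(z - \hat{h}^-)$ appears identically inside both $b_0$ and $\breve{b}_0$, it cancels in the difference, leaving
\begin{align*}
    b_0 - \breve{b}_0 = -(CK)^T C(x - \breve{x}).
\end{align*}
All the measurement- and propagation-dependent terms drop out, so only the plant-state estimation error survives.

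Next I would take norms of both sides and apply submultiplicativity of the induced (spectral) operator norm,
\begin{align*}
    \norm{b_0 - \breve{b}_0} = \norm{(CK)^T C(x - \breve{x})} \leq \norm{(CK)^T} \, \norm{C(x - \breve{x})}.
\end{align*}
The second factor is bounded by $\delta$ directly from the subspace-bounded hypothesis $\norm{C(x - \breve{x})} \leq \delta$, and the first factor equals $\sigma_{\text{max}}((CK)^T) = \sigma_{\text{max}}(CK)$ because transposition leaves the singular values of a matrix unchanged. Combining these two facts yields $\norm{b_0 - \breve{b}_0} \leq \delta\,\sigma_{\text{max}}(CK)$, which is exactly the claim.

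The only step that requires care is the cancellation: recognizing that only the $x$/$\breve{x}$ terms persist under subtraction is what collapses the problem to a one-line operator-norm estimate, and there is no genuine obstacle beyond that bookkeeping. Once the difference is isolated, the bound is a textbook application of the characterization of the largest singular value as the induced $\ell_2$ operator norm. I would be deliberate about noting that the hypothesis controls $C(x - \breve{x})$ rather than $x - \breve{x}$ itself, since this is precisely why the factor $C$ sits \emph{inside} the norm and the governing singular value is that of $CK$ rather than $K$ alone.
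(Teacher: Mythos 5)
Your proposal is correct and follows essentially the same route as the paper's proof: both cancel the common term $\hat{x}^- + K(z - \hat{h}^-)$ to isolate $-(CK)^T C(x - \breve{x})$, then split the norm as $\norm{(CK)^T}\,\norm{C(x-\breve{x})}$ and invoke the subspace bound together with the identification of the spectral norm with $\sigma_{\text{max}}(CK)$. Your remark that transposition preserves singular values makes explicit a step the paper glosses over (it writes $\norm{CK}$ directly), but the argument is the same.
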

\begin{proof} 
Let us define $w\coloneqq \hat{x}^- + K(z - \hat{h}^-)$. Then,
\begin{align*}
    \norm{b_0 - \breve{b}} &= \norm{-(CK)^T C (x - w) + (CK)^T C (\breve{x} - w)}\\
    &= \norm{K^T C^T C (\breve{x} - x)}\\
    &\leq \norm{CK} \norm{C (\breve{x} - x)} \leq \delta \norm{CK} = \delta \sigma_{\text{max}}(CK),
\end{align*}
completing the proof.
\end{proof}

\begin{lemma} \label{lemma:subopt-general-bound}
If the error of the attacker's state estimate is subspace bounded by $\delta$, then for any $a_1$, $a_2\in \mathbb{R}^p$ such that $J(a_1)\geq J(a_2)$, the difference in the objectives is bounded by
\begin{align}
    \begin{split}
        0&\leq J(a_1) - J(a_2) \\
        &\leq \onehalf\epsilon^2 \lambda_{\text{max}}(A_0) + \epsilon \left(\norm{A_0 a_2} + \delta \sigma_{\text{max}}(CK) +  \norm{\breve{b}_0}\right),      
    \end{split}
\end{align}
where $\epsilon\coloneqq\norm{a_1 - a_2}$, and  $\lambda_{\text{max}}(A_0)$ is the largest eigenvalue of $A_0$, while $\sigma_{\text{max}}(CK)$ is the largest singular value of $CK$, $A_0$ is defined by the Substitutions, $C$ is the attacker-defined weight matrix, and $K$ is the Kalman gain.
\end{lemma}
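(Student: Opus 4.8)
The plan is to expand the difference $J(a_1)-J(a_2)$ exactly using $J(a)=\onehalf a^T A_0 a + b_0^T a$ and then bound each resulting term. The lower bound $0\leq J(a_1)-J(a_2)$ is immediate from the hypothesis $J(a_1)\geq J(a_2)$, so all the real work lies in establishing the upper bound.

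First I would set $\Delta\coloneqq a_1-a_2$, so that $\epsilon=\norm{\Delta}$ and $a_1=a_2+\Delta$. Substituting into $J$ and using the symmetry of $A_0$ produces the exact identity $J(a_1)-J(a_2)=\onehalf\,\Delta^T A_0\Delta + (A_0 a_2)^T\Delta + b_0^T\Delta$, since the $\onehalf a_2^T A_0 a_2$ and $b_0^T a_2$ contributions cancel against those in $J(a_2)$. Then I would bound the three terms independently. The quadratic term is controlled by the Rayleigh-quotient estimate $\onehalf\,\Delta^T A_0\Delta\leq\onehalf\lambda_{\text{max}}(A_0)\norm{\Delta}^2=\onehalf\epsilon^2\lambda_{\text{max}}(A_0)$, which is valid because $A_0\succeq0$ from the Substitutions. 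The cross term is handled by Cauchy--Schwarz, giving $(A_0 a_2)^T\Delta\leq\norm{A_0 a_2}\,\norm{\Delta}=\epsilon\norm{A_0 a_2}$.

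The one nonroutine step is the linear term $b_0^T\Delta$, because $b_0$ depends on the unknown true plant state $x$ and is therefore not observable to the attacker. To handle this I would split $b_0=\breve{b}_0+(b_0-\breve{b}_0)$ and apply Cauchy--Schwarz to each piece: the observable part contributes at most $\breve{b}_0^T\Delta\leq\epsilon\norm{\breve{b}_0}$, and the error part satisfies $(b_0-\breve{b}_0)^T\Delta\leq\epsilon\norm{b_0-\breve{b}_0}$, which Lemma~\ref{lemma:q0-bound} bounds above by $\epsilon\,\delta\,\sigma_{\text{max}}(CK)$ under the subspace-boundedness hypothesis. Collecting the three estimates recovers exactly the claimed inequality.

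The main obstacle --- indeed essentially the only step beyond routine algebra --- is recognizing that the unobservable coefficient $b_0$ must be routed through its observable counterpart $\breve{b}_0$, and that Lemma~\ref{lemma:q0-bound} furnishes precisely the control $\norm{b_0-\breve{b}_0}\leq\delta\sigma_{\text{max}}(CK)$ needed to close the gap. Everything else is a standard quadratic expansion combined with an eigenvalue bound and two applications of Cauchy--Schwarz, so I would expect no difficulty once the decomposition of $b_0$ is in place.
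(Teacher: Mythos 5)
Your proof is correct and follows essentially the same route as the paper: the same quadratic expansion in $e = a_1 - a_2$, the eigenvalue bound on the quadratic term, and the key step of splitting $b_0 = \breve{b}_0 + (b_0 - \breve{b}_0)$ so that Lemma~\ref{lemma:q0-bound} controls the unobservable part. The only cosmetic difference is that the paper applies Cauchy--Schwarz once to $e^T(A_0 a_2 + b_0)$ and then the triangle inequality inside the norm, whereas you apply Cauchy--Schwarz termwise; the resulting bound is identical.
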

\begin{proof}
Using $e\coloneqq a_1 - a_2$, $A_0=A_0^T$, it follows that
\begin{align*}
    0 \leq J(a_1) - J(a_2) &= \sonehalf a_1^T A_0 a_1 + b_0^T a_1 - \sonehalf a_2^T A_0 a_2 - b_0^T a_2\\
    &= \sonehalf e^T A_0 (e + 2a_2) + b_0^T e\\
    &\leq \sonehalf \norm{e}^2 \norm{A_0} + \norm{e} \norm{A_0 a_2 + b_0}\\
    &= \sonehalf \epsilon^2 \lambda_{\text{max}}(A_0) + \epsilon \norm{A_0 a_2 + b_0}.
\end{align*}
In addition, 
\begin{align*}
    \norm{A_0 a_2 + b_0} &\leq \norm{A_0 a_2} + \norm{(b_0 - \breve{b}_0) + \breve{b}_0}\\
    &\leq \norm{A_0 a_2} + \delta \sigma_{\text{max}}(CK) + \norm{\breve{b}_0}
\end{align*}
from Lemma~\ref{lemma:q0-bound}, thus completing the proof. 
\end{proof}

\begin{theorem}[Suboptimality in Plant-State MMD] \label{theorem:MMD-err-bound}
If~the error of the attacker's state estimate is subspace bounded by $\delta$, then the difference between the MMD objective evaluated on the solutions of~\eqref{eq:mmd-qcqp} and~\eqref{eq:plant-state-mmd-qcqp} ($a^*$ and $a^\dagger$), is bounded~by
\begin{align}
    \begin{split}
        0 \leq J(a^*) - J(a^\dagger)
        \leq 2 \tau \lambda_{\text{max}}&(S) \lambda_{\text{max}}(A_0) +\\
        +2\sqrt{\tau\lambda_{\text{max}}(S)}& \left(\norm{A_0 a_2} + \delta \sigma_{\text{max}}(CK) + \norm{\breve{b}_0}\right).
    \end{split}
\end{align}
\end{theorem}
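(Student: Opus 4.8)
The plan is to obtain this bound as a direct composition of the two preceding results, Theorem~\ref{theorem:att-err-bound} and Lemma~\ref{lemma:subopt-general-bound}, under the identification $a_1 = a^*$ and $a_2 = a^\dagger$. The only genuine preliminary step is to verify that the monotonicity hypothesis of Lemma~\ref{lemma:subopt-general-bound}, namely $J(a_1)\geq J(a_2)$, actually holds for this pairing. Since the true-state problem~\eqref{eq:mmd-qcqp} and the plant-state problem~\eqref{eq:plant-state-mmd-qcqp} share the identical feasibility constraint $G(a)\leq 0$, the plant-state optimizer $a^\dagger$ is itself feasible for the true-state problem; as $a^*$ is by definition the maximizer of $J$ over that feasible set, we get $J(a^*)\geq J(a^\dagger)$, so the left inequality $0 \leq J(a^*)-J(a^\dagger)$ is immediate and the lemma applies with $\epsilon \coloneqq \norm{a^* - a^\dagger}$.

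With monotonicity secured, I would invoke Lemma~\ref{lemma:subopt-general-bound} to write $0 \leq J(a^*)-J(a^\dagger) \leq \onehalf \epsilon^2 \lambda_{\text{max}}(A_0) + \epsilon\left(\norm{A_0 a^\dagger} + \delta\sigma_{\text{max}}(CK) + \norm{\breve{b}_0}\right)$. The final step is to eliminate $\epsilon$ using the absolute attack-error bound of Theorem~\ref{theorem:att-err-bound}, which gives $\epsilon \leq 2\sqrt{\tau\lambda_{\text{max}}(S)}$. Because both terms of the lemma's bound are monotone nondecreasing in $\epsilon\geq 0$, I may substitute this upper bound directly: the quadratic term becomes $\onehalf\left(2\sqrt{\tau\lambda_{\text{max}}(S)}\right)^2\lambda_{\text{max}}(A_0) = 2\tau\lambda_{\text{max}}(S)\lambda_{\text{max}}(A_0)$, and the linear term becomes $2\sqrt{\tau\lambda_{\text{max}}(S)}\left(\norm{A_0 a^\dagger} + \delta\sigma_{\text{max}}(CK) + \norm{\breve{b}_0}\right)$. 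Summing these reproduces the claimed bound exactly.

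I do not anticipate a substantive obstacle here: the result is essentially a splicing of the two prior results, and the only places where care is required are bookkeeping. The first is confirming the shared-constraint argument above so that the $J(a^*)\geq J(a^\dagger)$ ordering---and hence the nonnegativity and the direction of the lemma's inequality---is correct. The second is checking that replacing $\epsilon$ by its upper bound is legitimate in both terms simultaneously, which holds because the coefficient $\onehalf\lambda_{\text{max}}(A_0)$ of $\epsilon^2$ and the linear coefficient multiplying $\epsilon$ are both nonnegative (recall $A_0\succeq 0$), so each summand is nondecreasing on $\epsilon\in[0,\infty)$. A minor cosmetic point is that the theorem statement retains the symbol $a_2$ inside $\norm{A_0 a_2}$; under the identification $a_2 = a^\dagger$ this is the plant-state optimizer, and I would either substitute $a^\dagger$ throughout or note the correspondence explicitly to avoid ambiguity.
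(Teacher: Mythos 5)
Your proposal is correct and takes essentially the same route as the paper's own proof, which simply applies Lemma~\ref{lemma:subopt-general-bound} with $a_1=a^*$, $a_2=a^\dagger$ and substitutes $\epsilon = \norm{a^*-a^\dagger} \leq 2\sqrt{\tau\lambda_{\text{max}}(S)}$ from Theorem~\ref{theorem:att-err-bound}. You additionally spell out two details the paper leaves implicit---verifying $J(a^*)\geq J(a^\dagger)$ via the shared feasible set, and the monotonicity in $\epsilon$ that justifies the substitution---and you correctly note that the $a_2$ appearing in the theorem's bound should be read as $a^\dagger$.
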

\begin{proof}
The proof follows by from Lemma~\ref{lemma:subopt-general-bound} and using $\epsilon \coloneqq \norm{a^* - a^\dagger} \leq 2 \sqrt{\tau\lambda_{\text{max}}(S)}$ from Theorem~\ref{theorem:att-err-bound}. 
\end{proof}

All quantities on the right-hand-side of Theorem~\ref{theorem:MMD-err-bound} are available at runtime under full-reactive knowledge without access to the true state of the plant. This bound can thus be computed online and dictates how far the attacker can be from the optimal attack~impact.

Finally, we bound the difference between the attacker's perceived impact and the true impact of an attack.

\begin{theorem} [Perceived vs. True Impact] \label{theorem:MMD-impact-bound}
If the error of the attacker's estimate of the plant state is subspace bounded by $\delta$, then the difference between the true impact and the perceived impact of an attack $a$ is bounded~by
\begin{align}
    |J(a) - \breve{J}(a)| \leq \delta \sigma_{\text{max}}(CK) \norm{a};
\end{align}
here, $\sigma_{\text{max}}(CK)$ is the largest singular value of $CK$, $C$ is a weight matrix, and $K$ the Kalman gain.
\end{theorem}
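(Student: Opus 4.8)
The plan is to observe that the two objectives $J$ and $\breve{J}$ differ only in their linear coefficients: both share the identical quadratic form $\tfrac{1}{2}a^T A_0 a$, and only the vectors $b_0$ and $\breve{b}_0$ distinguish them. Consequently, the first step is to subtract the two objectives directly, using their definitions from the Substitutions, so that the quadratic terms cancel exactly and leave
\begin{align*}
    J(a) - \breve{J}(a) = b_0^T a - \breve{b}_0^T a = (b_0 - \breve{b}_0)^T a.
\end{align*}
This cancellation is the crux of the argument and makes the bound essentially linear in $a$.

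Next I would take absolute values and apply the Cauchy--Schwarz inequality to the inner product, yielding $|J(a) - \breve{J}(a)| = |(b_0 - \breve{b}_0)^T a| \leq \norm{b_0 - \breve{b}_0}\,\norm{a}$. At this point the problem reduces entirely to controlling $\norm{b_0 - \breve{b}_0}$, which is exactly the quantity already estimated in Lemma~\ref{lemma:q0-bound}. I would invoke that lemma, under its hypothesis that the attacker's plant-state estimate is subspace bounded by $\delta$, to substitute $\norm{b_0 - \breve{b}_0} \leq \delta\,\sigma_{\text{max}}(CK)$, which immediately gives the claimed bound $|J(a) - \breve{J}(a)| \leq \delta\,\sigma_{\text{max}}(CK)\,\norm{a}$.

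I do not anticipate a genuine obstacle here: the result is a short composition of an algebraic cancellation, Cauchy--Schwarz, and an already-proven lemma. The only point requiring minor care is confirming that $J$ and $\breve{J}$ indeed carry the \emph{same} $A_0$ (so the quadratic parts truly cancel) and that $\breve{b}_0$ is precisely the $\breve{x}$-substituted analogue of $b_0$, both of which follow directly from the Substitutions and from the construction in Proposition~\ref{prop:plant-state-mmd-qcqp}. If anything is subtle, it is purely bookkeeping in matching the definitions rather than any analytic difficulty.
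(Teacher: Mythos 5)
Your proposal is correct and follows exactly the paper's own argument: the quadratic terms cancel since $J$ and $\breve{J}$ share the same $A_0$, then Cauchy--Schwarz and Lemma~\ref{lemma:q0-bound} give the bound $|J(a) - \breve{J}(a)| = |(b_0 - \breve{b}_0)^T a| \leq \norm{b_0 - \breve{b}_0}\norm{a} \leq \delta\,\sigma_{\text{max}}(CK)\norm{a}$. No gaps; this is the same proof the paper gives, just written out more explicitly.
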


\begin{proof}
The result directly holds since from Lemma~\ref{lemma:q0-bound},
$|J(a) - \breve{J}(a)| = |(b_0 - \breve{b}_0)^T a|
    \leq \norm{b_0 - \breve{b}_0} \norm{a}\leq \\
    \leq \delta \sigma_{\text{max}}(CK) \norm{a}.
$
\end{proof}
\subsection{Design of MASA Attacks}
We now employ the same procedure to design MASA attacks and bound online attack performance. We start with the following result for fully optimal MASA attack design.
\begin{proposition}\label{prop:es-masa}
The estimated-state MASA attack (Def.~\ref{def:es-masa-attack}) is the solution of the optimization~problem
\begin{align} \label{eq:es-masa}
    \begin{split}
        a^*_{\text{masa, es}} &= \argmin_a \ \onehalf a^T A_0 a + \Bar{b}_0^T a \\
        & \text{subject to} \ \onehalf a^T A_1 a + b_1^T a + d_1 \leq 0.
    \end{split}
\end{align}
\end{proposition}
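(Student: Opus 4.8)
The plan is to mirror the argument of Proposition~\ref{prop:mmdopt}, exploiting the fact that the estimated-state MASA objective depends on the attack only through $\hat{x}(a)$, which is governed by the piecewise compromised update~\eqref{eq:comp-ekf-update}. First I would substitute $\nu_k = \hat{x}(a)$ into the MASA objective~\eqref{eq:masa-attack}, yielding $\onehalf\norm{C\hat{x}(a) - \mathcal{X}^a}^2$, and then split into the two cases of~\eqref{eq:comp-ekf-update} according to whether the attack is stealthy ($\Tilde{y}^T(a)S^{-1}\Tilde{y}(a)\leq\tau$) or triggers the detector.

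In the stealthy case, I would use $\hat{x}(a) = \hat{x}^- + K\Tilde{y}(a)$ with $\Tilde{y}(a) = z + a - \hat{h}^-$, so that $\hat{x}(a) = w + Ka$ for $w \coloneqq \hat{x}^- + K(z - \hat{h}^-)$. Expanding the squared norm $\onehalf\norm{CKa + (Cw - \mathcal{X}^a)}^2$ then produces the quadratic form $\onehalf a^T A_0 a + \Bar{b}_0^T a$ plus the $a$-independent constant $\onehalf\norm{Cw - \mathcal{X}^a}^2$, where $A_0 = (CK)^T CK$ and $\Bar{b}_0 = (CK)^T C w - (CK)^T\mathcal{X}^a$ match the Substitutions once $w$ is recognized as $\hat{x}_{k|k-1} + K_k(z_k - h(\hat{x}_{k|k-1}))$. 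Since the dropped term is constant, the argmin over stealthy attacks is exactly $\argmin_a \Bar{J}(a)$. Restricting to stealthy attacks is in turn equivalent to imposing $g^{\chi^2}(a)\leq\tau$, which—after substituting $\Tilde{y}(a) = (z-\hat{h}^-)+a$ and expanding—is precisely $G(a) = \onehalf a^T A_1 a + b_1^T a + d_1\leq 0$ with the $A_1, b_1, d_1$ of the Substitutions, giving~\eqref{eq:es-masa}.

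The key step, as in Proposition~\ref{prop:mmdopt}, is handling the detected case, where $\hat{x}(a) = \hat{x}^-$ is independent of $a$ and the objective collapses to the constant $\onehalf\norm{C\hat{x}^- - \mathcal{X}^a}^2$. I would observe that the stealthy attack $a^0 = \hat{h}^- - z$ forces $\Tilde{y}(a^0) = 0$, hence $\hat{x}(a^0) = \hat{x}^-$ and $g^{\chi^2}(a^0) = 0 < \tau$; thus $a^0$ is feasible and attains precisely the detected-case value. Consequently the minimum of the true MASA objective over all $a$ is already attained within the feasible (stealthy) region, so no minimizer is lost by discarding the detected region, and the constrained problem~\eqref{eq:es-masa} recovers the estimated-state MASA attack.

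I expect the only genuine subtlety to be the \emph{direction} of optimization: because MASA minimizes—rather than maximizes, as MMD does—I cannot simply argue that detected attacks are dominated. Instead I must verify that the fixed detected-case objective value cannot undercut the constrained optimum, and the feasibility of $a^0$ is exactly what guarantees this. Everything else reduces to the same algebraic expansions already validated for Proposition~\ref{prop:mmdopt}.
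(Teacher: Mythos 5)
Your proof is correct and takes essentially the same route as the paper, whose proof is simply ``follows from Proposition~\ref{prop:mmdopt} by replacing $Cx$ with $\mathcal{X}^a$''; your expansion yielding $A_0$ and $\Bar{b}_0$ matches the Substitutions exactly. Your explicit handling of the detected case---noting that the feasible attack $a^0=\hat{h}^- - z$ attains the detected-case value, so the minimization direction (unlike MMD's maximization) loses nothing by restricting to stealthy attacks---is precisely the subtlety the paper's one-line proof leaves implicit, and it is resolved correctly.
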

\begin{proof}
Follows directly from Proposition~\ref{prop:mmdopt} by replacing $Cx$ with $\mathcal{X}^a$ where $\mathcal{X}^a$ is the attacker specified state.
\end{proof}

Note that the estimated-state MASA attack does not require knowledge of the true state of the plant and is thus feasible (i.e., can be executed online).

\subsubsection{Relaxations}
The true-state MASA attack from Definition~\ref{def:es-masa-attack} requires the true plant state which is unavailable to the attacker. Furthermore, the relaxation using $\breve{x}$ instead of the true state is not sufficient in the MASA attack due to the delayed dependence of $\breve{x}$ on $a$ -- i.e., the attack impacts $\breve{x}$ after control corrects for errors in $\hat{x}$ (which \emph{does} depend on $a$ through the compromised update in~\eqref{eq:comp-ekf-update}). Such delayed dependence can be highly non-linear and depends on the victim's goal and controller which are not fully available under a full-reactive knowledge model.

Therefore, we propose an alternative relaxation using a reflection of the attacker's estimated state of the plant.

\begin{definition}[Reflected True-State MASA] \label{def:ref-as-masa-attack}
Let $\mathcal{X}^a\in\mathbb{R}^w$ be an attacker-specified state for true-state MASA (Def.~\ref{def:es-masa-attack}). Let $C$ be an attacker-specified projection matrix. An attack $a^\dagger$ is a reflected true-state MASA attack~if obtained as
\begin{align}
    \begin{split}
        a^\dagger_{\text{masa, ts}} = \argmin_a \ \| C \hat{x}(a) - \mathcal{X}^b \|_2^2\\
        \text{s.t.} \quad \Tilde{y}(a) S^{-1} \Tilde{y}^T(a) \leq \tau \\
        \mathcal{X}^b \coloneqq C\hat{x}^- + C\left( \breve{x} - \mathcal{X}^a \right).
    \end{split}
\end{align}
\end{definition}

Now, we can capture the following result.
\begin{proposition}\label{prop:ts-masa}
The reflected true-state MASA attack can be obtained as a solution to the following problem
\begin{align}\label{eq:ts-masa}
    \begin{split}
        a^\dagger_{\text{masa, ts}} &= \argmin_a \ \onehalf a^T A_0 a + \breve{\Bar{b}}_0^T a \\
        & \text{subject to} \ \onehalf a^T A_1 a + b_1^T a + d_1 \leq 0.
    \end{split}
\end{align}
\end{proposition}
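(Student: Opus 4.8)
The plan is to reduce Proposition~\ref{prop:ts-masa} to the computation already carried out for Propositions~\ref{prop:mmdopt} and~\ref{prop:es-masa}, observing that the reflected true-state objective of Definition~\ref{def:ref-as-masa-attack} has the identical functional form to the estimated-state MASA objective, with the target $\mathcal{X}^a$ replaced by the reflected target $\mathcal{X}^b$. The argument splits into three parts: reducing the piecewise estimate to the stealthy branch, expanding the quadratic objective, and rewriting the $\chi^2$ constraint as $G(a)\leq 0$.

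First I would dispatch the piecewise compromised update~\eqref{eq:comp-ekf-update} exactly as in the proof of Proposition~\ref{prop:mmdopt}, taking care that the MASA objective is \emph{minimized} rather than maximized. On the detector-triggering branch the estimate collapses to $\hat{x}(a) = \hat{x}^-$, whose objective value $\norm{C\hat{x}^- - \mathcal{X}^b}^2$ is exactly the value attained by the null stealthy attack $a^0 = \hat{h}^- - z$ (for which $\Tilde{y}(a^0) = 0$ and hence $\hat{x} = \hat{x}^-$). Since $a^0$ lies in the stealthy set, no detector-triggering attack can strictly beat the best feasible stealthy attack, so it suffices to minimize over $\{a : \Tilde{y}^T(a) S^{-1} \Tilde{y}(a) \leq \tau\}$, equivalently over the region $G(a)\leq 0$. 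I expect this direction-of-inequality bookkeeping to be the only genuinely delicate step, since it differs from the maximization argument of Proposition~\ref{prop:mmdopt}.

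On the stealthy branch I would substitute $\hat{x}(a) = \hat{x}^- + K\Tilde{y}(a) = (\hat{x}^- + K(z - \hat{h}^-)) + Ka$ into $\norm{C\hat{x}(a) - \mathcal{X}^b}^2$. Writing $m \coloneqq C(\hat{x}^- + K(z - \hat{h}^-))$, the objective becomes $\norm{CKa + (m - \mathcal{X}^b)}^2 = a^T(CK)^T(CK)a + 2(m - \mathcal{X}^b)^T (CK) a + \text{const}$; dropping the $a$-independent term and dividing by two identifies it with $\onehalf a^T A_0 a + \breve{\Bar{b}}_0^T a$, since $A_0 = (CK)^TCK$ and $\breve{\Bar{b}}_0 = (CK)^T C(\hat{x}^- + K(z - \hat{h}^-)) - (CK)^T\mathcal{X}^b$ match the Substitutions term for term. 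Finally, setting $r \coloneqq z - \hat{h}^-$ so that $\Tilde{y}(a) = r + a$, the constraint $(r+a)^T S^{-1}(r+a) \leq \tau$ expands to $a^T S^{-1} a + 2 r^T S^{-1} a + r^T S^{-1} r - \tau \leq 0$, which equals $\onehalf a^T A_1 a + b_1^T a + d_1$ with $A_1 = 2S^{-1}$, $b_1 = 2S^{-1} r$, and $d_1 = r^T S^{-1} r - \tau$. This recovers~\eqref{eq:ts-masa} exactly; the entire derivation amounts to invoking the proof of Proposition~\ref{prop:es-masa} under the substitution $\mathcal{X}^a \mapsto \mathcal{X}^b$, with the only added bookkeeping being to confirm that the two expressions given for $\mathcal{X}^b$ (in Definition~\ref{def:ref-as-masa-attack} and in the Substitutions list) agree.
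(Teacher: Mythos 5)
Your proposal is correct and follows essentially the same route as the paper, whose proof is simply the one-line observation that Proposition~\ref{prop:ts-masa} follows from Proposition~\ref{prop:mmdopt} by replacing $Cx$ with $\mathcal{X}^b$; you have merely spelled out the details that the paper leaves implicit (the stealthy-branch reduction adapted to minimization, the quadratic expansion matching $A_0$, $\breve{\Bar{b}}_0$, and the constraint terms, and the consistency of the two expressions for $\mathcal{X}^b$). All of these details check out, including the observation that the detector-triggering branch is dominated by the stealthy attack $a^0 = \hat{h}^- - z$.
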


\begin{proof}
Follows from Proposition~\ref{prop:mmdopt}, by  replacing $Cx$ with $\mathcal{X}^b$ "where $\mathcal{X}^b$ is the reflection of the attacker's goal state across the current state.
\end{proof}

The reflected true-state MASA attack will myopically push $\hat{x}$ in the \emph{opposite} direction of the \emph{attacker's} goal state, $\mathcal{X}^a$, which intuitively will cause the control to compensate \emph{towards} the adversary's goal state. However, without access to the control module, i.e.,~without knowing how the control will react to the state estimate error, this has few guarantees and the worst-case error may be difficult if not impossible to bound. That said, we find that it works well in practice.
\section{Evaluation} \label{sec:5-experiments}

We demonstrate impact of the myopic attacks on nonlinear state estimation in a kinematic case study. Subsequently, the bounds from Theorems~\ref{theorem:att-err-bound},~\ref{theorem:MMD-err-bound}, and~\ref{theorem:MMD-impact-bound}, and Lemma~\ref{lemma:q0-bound} are validated using Monte Carlo~(MC) simulations.

\subsection{Case Studies}

We use a nonlinear kinematic state estimation application. Still, the presented principles and experiments generalize to all applications of linear and extended Kalman~filtering.

\subsubsection{Model}
We simulate a dynamic target tracking scenario by modeling spherical coordinate returns from a radar sensor with component-wise Gaussian noise according to~\cite{Li2003}. We simulate range, azimuth, and elevation measurements $(\rho, \theta, \phi)$ relative to a fixed sensor platform and use an EKF to process measurements. We estimate position, velocity, and acceleration states using a nearly-constant-acceleration model from~\cite{Li2003}. We allow the filter to converge over $t=10.0~s$ before starting the attacks and running until $t=18.0~s$.

\subsubsection{Methods}
Each attack objective is convex with closed-form gradients and Hessians. We pre-condition following~\cite{boyd2004convex} using the Cholesky factorization of the inverse constraint Hessian to achieve faster convergence. This step is essential to obtaining real-time convergence, particularly in the case of order-of-magnitude scaling discrepancies between measurements (i.e., $\rho >> \theta,\ \phi$).

We choose a constrained trust-region optimization algorithm and find that the optimization runs faster than the simulation rate, easily keeping up  with real-time.

\subsubsection{Case Study I -- MMD}

Fig.~\ref{fig:mmd_case_study} shows results of the MMD attacks with the projection matrix set as $C_{i,j} = \begin{cases} 1 & i=j \\ 0 & i \neq j \end{cases}$, $w=3$, $n=9$. The attack quickly compromises the victim's (i.e., plant) state estimate, even with nonlinearities in E\&C. Fig.~\ref{fig:mmd_case_study} illustrates that the attack never exceeds the threshold set by the $\chi^2$ anomaly detector  meaning the attack remains stealthy, entirely in accordance with Proposition~\ref{prop:chi2-equality}.

\begin{figure}[t]
     \centering
     \begin{subfigure}[b]{\linewidth}
         \centering
         \includegraphics[width=0.76\textwidth]{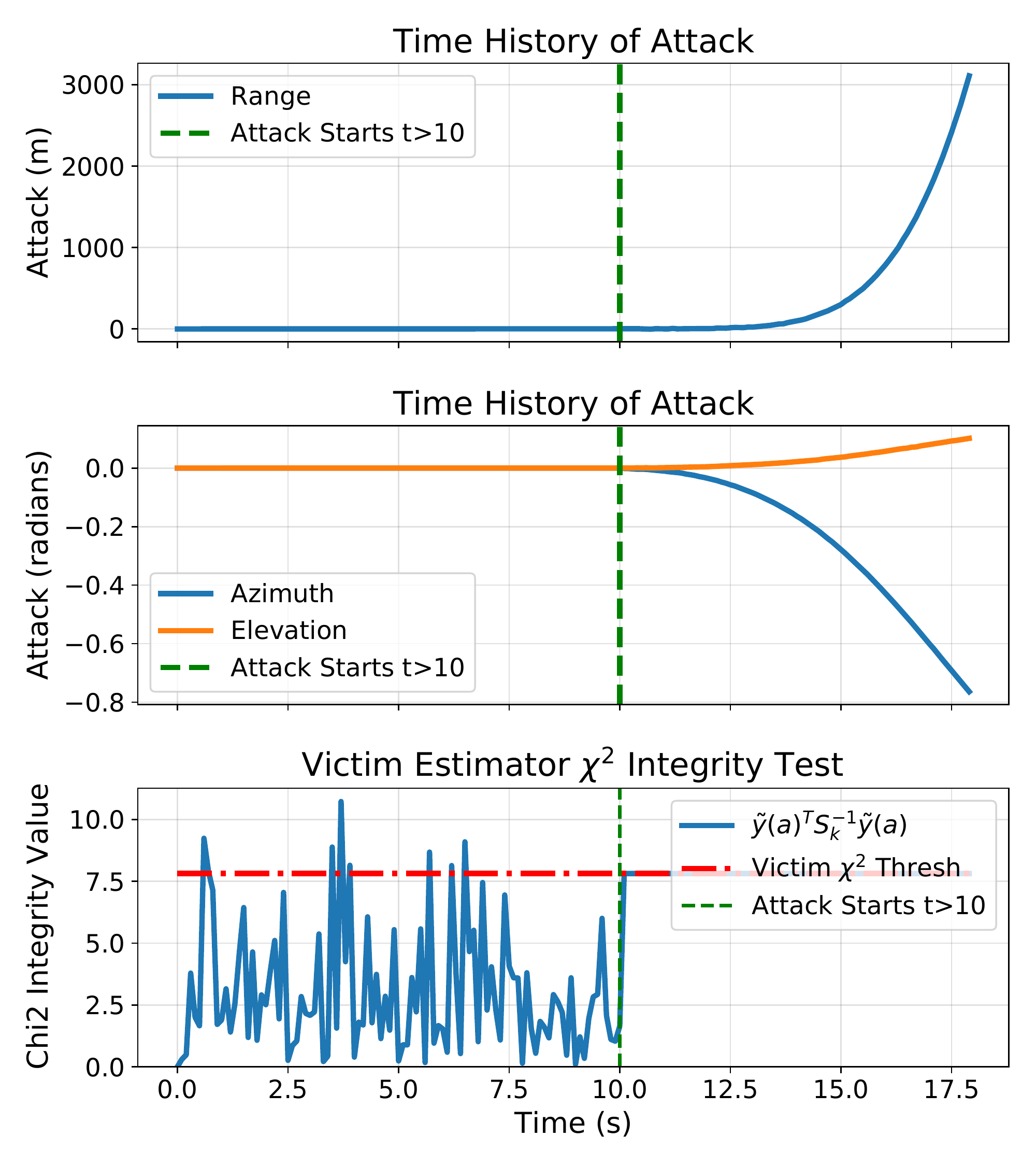}
         \vspace{-6pt}
     \end{subfigure}
    \caption{Nonlinear estimation is easily compromised by myopic MMD attacks. Attack quickly and stealthily compromises spherical coordinate measurements. Estimation errors 
    grow beyond the uncertainty bounds suggested by the EKF.}
    \label{fig:mmd_case_study}
    \vspace{-6pt}
\end{figure}

\subsubsection{Case Study II-- MASA}
Fig.~\ref{fig:masa_case_study_2} shows the same model with an estimated-state MASA attack following~\eqref{eq:masa-attack}. The attacker drives $\hat{x}$ towards a specified goal state, $\mathcal{X}^a$. In this kinematic application, we find that solely specifying attack goal as a position state (i.e.,~$C\in\mathbb{R}^{3\times 9}_{+}$) does succeed in rapidly pushing the state estimate towards the attacker goal, but that overshoot occurs. This is expected since the attack was formulated as a myopic optimization. Thus, we choose an attacker goal state that has both position and velocity. Specifically, $C_{i,j} = \begin{cases} 1 & i=j \\ 0 & i \neq j \end{cases}$, $w=6$, $n=9$. We observe the objective remains constant at 0 in Fig.~\ref{fig:masa_case_study_2} without overshoot.

\begin{figure}[t]
    \centering
    \begin{subfigure}[b]{\linewidth}
        \centering
        \includegraphics[width=0.76\textwidth]{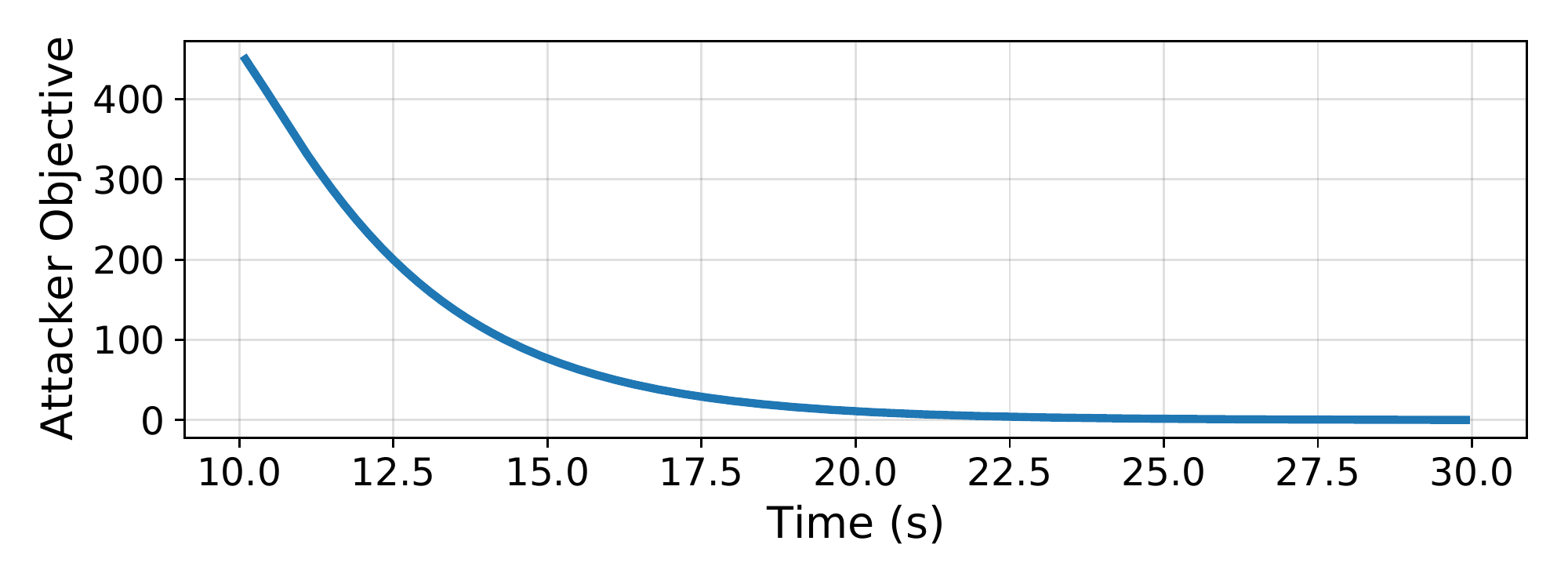}
        \vspace{-10pt}
    \end{subfigure}
    \begin{subfigure}[b]{\linewidth}
        \centering
        \includegraphics[width=0.76\textwidth]{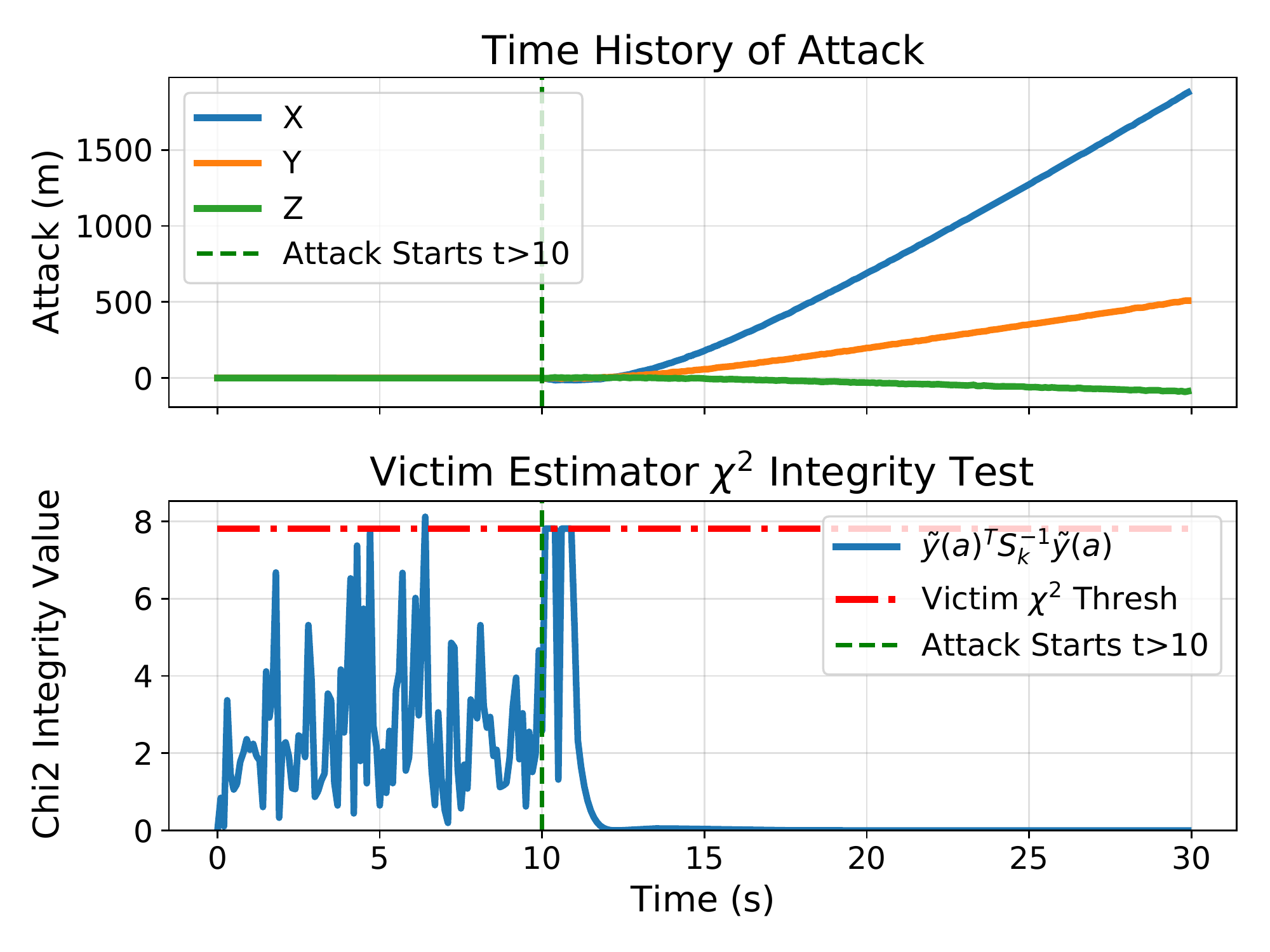}
        \vspace{-10pt}
    \end{subfigure}
    \caption{MASA attack quickly drives the estimated state towards attacker-specified state. The optimization objective drops to 0 once the attacker perfectly reaches the state.} 
    \label{fig:masa_case_study_2}
    \vspace{-10pt}
\end{figure}

\subsection{MC Bound Simulation}

Next, we use Monte Carlo simulations without a dedicated dynamics model to investigate the bounds derived in Sec.~\ref{sec:4-opt-attacks}. 

\subsubsection{Methods}
Given a fixed true state $x_k\in \mathbb{R}^9$, $\hat{x}_{k|k-1}$ and $\breve{x}_{k|k}$ are sampled from a Gaussian distribution given a fixed victim-state covariance matrix, $P\in\mathbb{R}^{9\times 9}_{+}$, $P\succ0$. A measurement model creates a measurement from the true state for the EKF. We choose $C\in\mathbb{R}^{3\times 9}_{+}$ to maximize the deviation in the first three states.

\subsubsection{Results}

$N=10000$ Monte Carlo trials are used to observe behavior of the myopic attacks. Fig.~\ref{fig:bounds_study} shows histograms of quantities derived in Theorems~\ref{theorem:att-err-bound},~\ref{theorem:MMD-err-bound},~\ref{theorem:MMD-impact-bound} and Lemma~\ref{lemma:q0-bound}. All bounds are order-of-magnitude tight.

\begin{figure}[!t]
     \centering
     \begin{subfigure}[b]{\linewidth}
         \centering
         \includegraphics[width=0.734\textwidth]{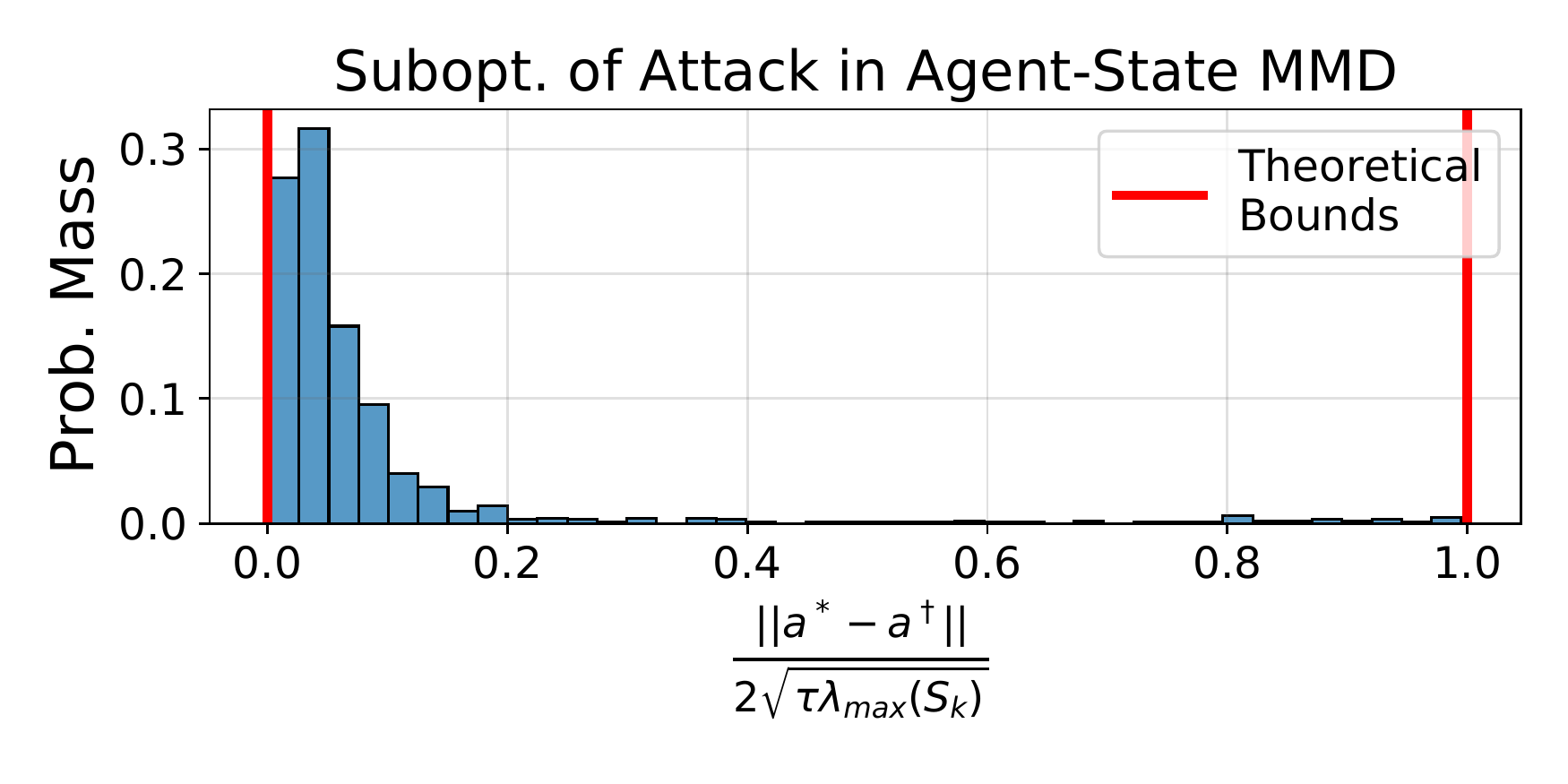}
         \vspace{-10pt}
         \label{fig:bound_attack_err}
     \end{subfigure}
     \begin{subfigure}[b]{\linewidth}
         \centering
         \includegraphics[width=0.734\textwidth]{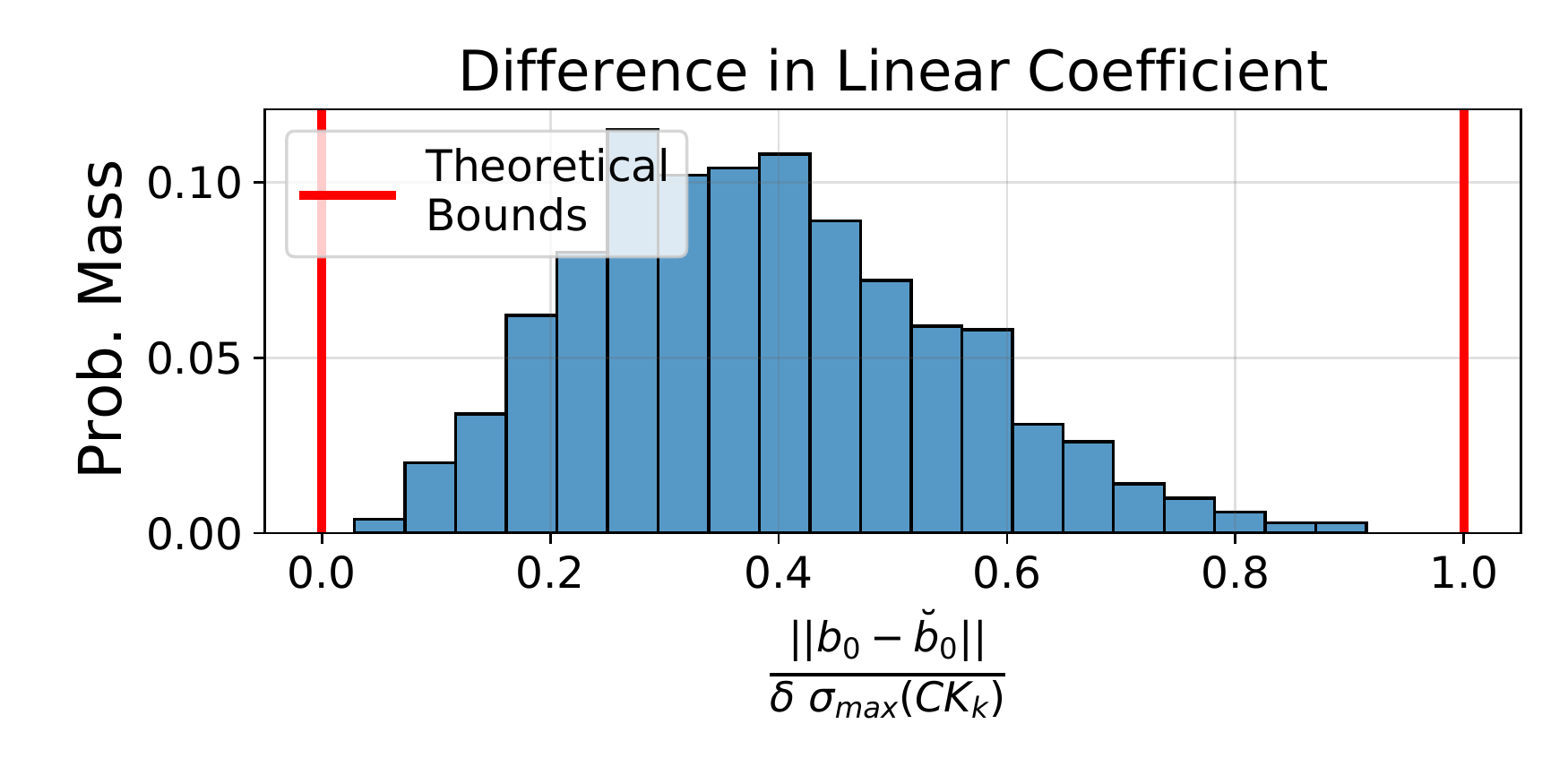}
         \vspace{-10pt}
         \label{fig:bound_linear_coeff}
     \end{subfigure}
     \begin{subfigure}[b]{\linewidth}
         \centering
         \includegraphics[width=0.734\textwidth]{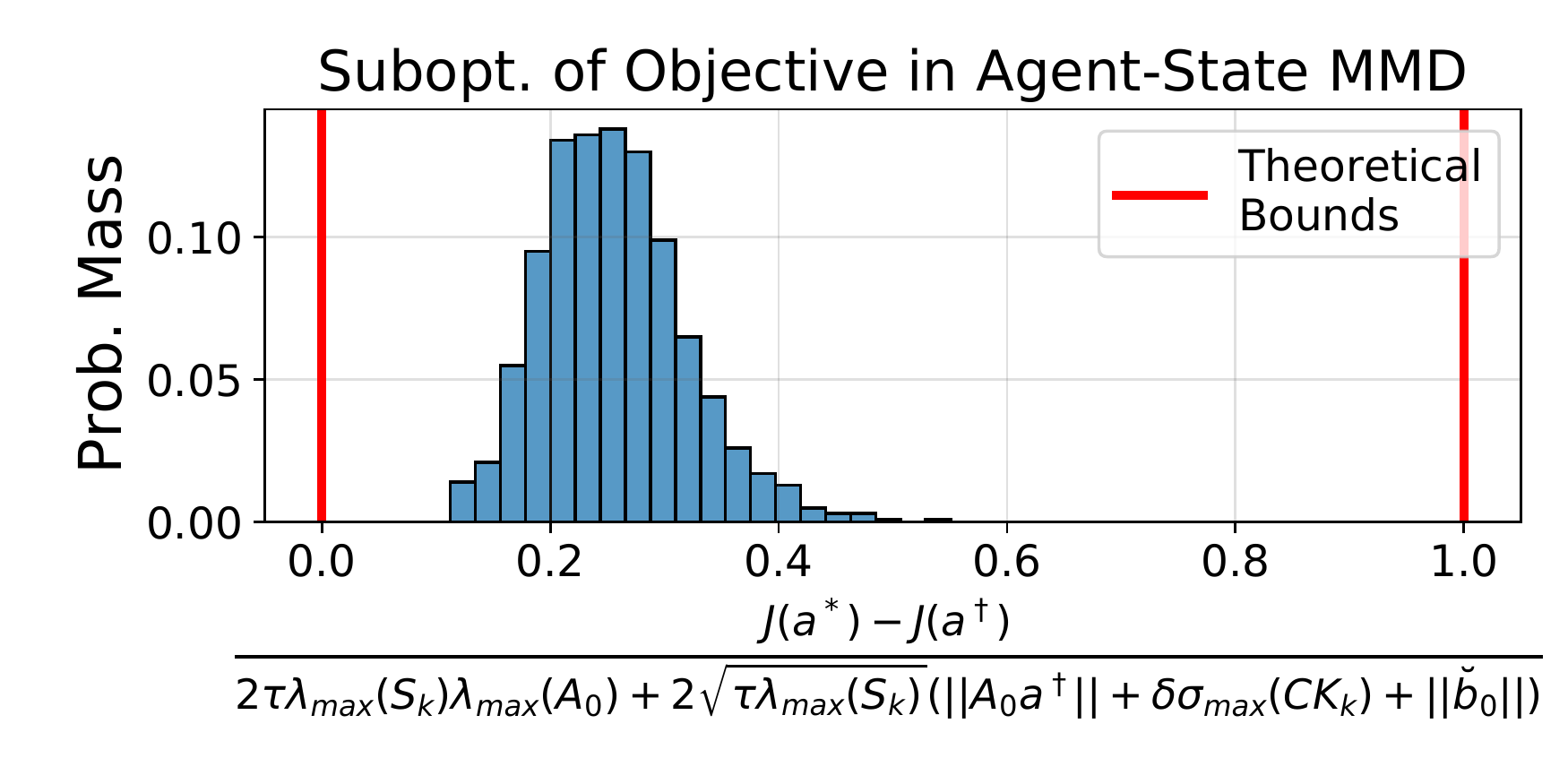}
         \vspace{-10pt}
         \label{fig:bound_subopt}
     \end{subfigure}
     \begin{subfigure}[b]{\linewidth}
         \centering
         \includegraphics[width=0.734\textwidth]{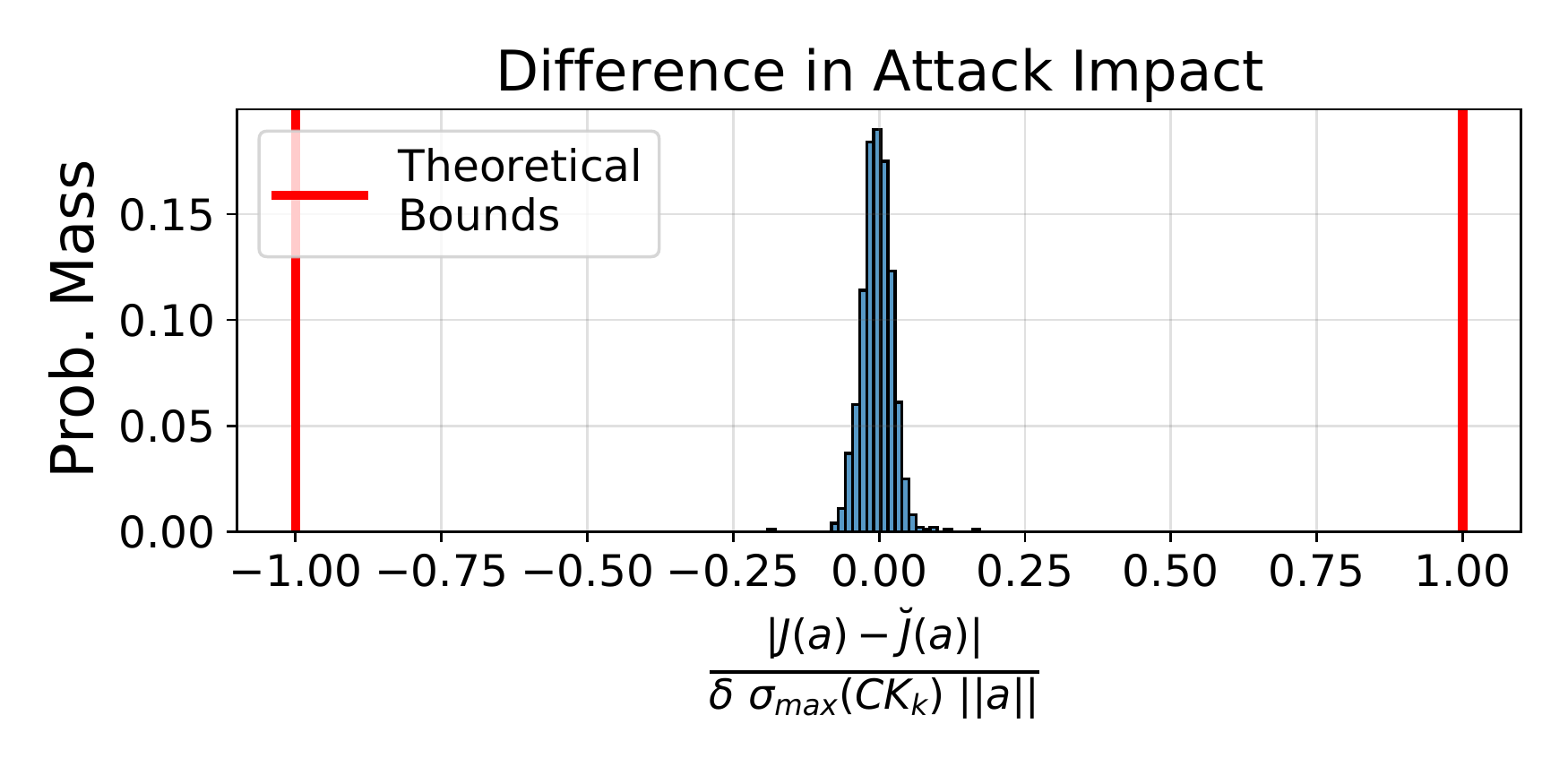}
         \vspace{-10pt}
         \label{fig:bound_att_impact}
     \end{subfigure}
    \caption{Derived bounds for MMD attack from $N=1000$ Monte Carlo simulations normalized by upper bound to fit between [\{-1, 0\}, 1]: Comparison with bounds from (a) Theorem~\ref{theorem:att-err-bound}, (b) Lemma~\ref{lemma:q0-bound}, (c) Theorem~\ref{theorem:MMD-err-bound}, and (d) Theorem~\ref{theorem:MMD-impact-bound}.}
    \label{fig:bounds_study}
    \vspace{-12pt}
\end{figure}

\section{Conclusion} \label{sec:7-conclusion}

We defined myopic maximum deviation and myopic adversarial state approach attacks. When attacking EKFs with a $\chi^2$ anomaly detector, each attacker goal can be formulated as a convex QCQP. We provided practical relaxations to ensure run-time feasibility given an appropriate attacker knowledge model. Finally, we showed that the difference between the optimal and relaxed problems is bounded. Future work will use this as a basis to derive attacks with relaxed information models and develop robust estimators for nonlinear~systems.


\bibliographystyle{IEEEtran}
\bibliography{references}

\end{document}